%%%%%%%%%%%%%%%%%%%%%%%
%\documentclass[12pt,oneside]{amsproc}
\documentclass[12pt,oneside]{amsart}
%\documentclass{article}
%\usepackage{tikz}
%\usetikzlibrary{positioning}
%\usepackage[utf8]{inputenc}
%\usepackage[russian]{babel}
%
\usepackage{amssymb}
\usepackage{amsfonts}
\usepackage{amscd}
\usepackage[matrix, arrow, curve]{xy}
\usepackage{graphicx}
%\graphicspath{{pictures}}
%\DeclareGraphicsExtensions{.pdf,.png,.jpg}
%\usepackage[usenames]{color}
%\usepackage{colortbl}
%\usepackage[pdftex]{hyperref}
%\usepackage[T2A]{fontenc}

%\usepackage{mes}          % For the first translation        Oleg
%\usepackage[draft]{mes}  % For subsequent translations      Oleg
%\reversemarginpar         %                                  Oleg
%\usepackage[notcite,notref]{showkeys}
%
%
\numberwithin{equation}{section}

\overfullrule=0pt
\textwidth15.5cm \textheight=23.0truecm
%\hoffset-1.5cm
%%%%%%%%%%%%%%%%%%%%%%%%%%%%%%%%%%%%%%%%%%%%%
%%%%%%%%%% Only for Russian draft %%%%%%%%%%%%%%%%%%%%%%%%%%%%%%%%%%%
\newtheorem{theorem}{Theorem}[section]

\newtheorem{proposition}[theorem]{Proposition}

\newtheorem{lemma}[theorem]{Lemma}
\theoremstyle{definition}

\theoremstyle{remark}
\newtheorem{remark}{Remark}[section]
\newtheorem{example}{Example}[section]

%%%%%%%%%%%%%%%%%%%%%%%%%%%%%%%%%%%%%%%%%%%%%
%\makeindex
%%%%%%%%%%%%%%%%%%%%%%%%%%%%%%%%%%%%%%%%%%%%%%%%
\begin{document}
%%%%%%%%%%%%%%%%%%%%%%%%%%%%%%%%%%%%%%%%%%%%%%%%%
%%%%%%%%%%%%  macrodefinitions
%%%%%%%%%%%%%%%%%%%%%%%%%%%%%%%%%%%%%%%%%%%%%%%%%
% Общие
\newcommand{\define}{\def}
\newcommand{\redefine}{\def}
%%%%%%%%%%%%%%%%%%%%%%%%%%%%%%%%%%%%%%%%%%%%%%%%%
%%%%   temporary
\newcommand{\vac}{|0\rangle}
%%%%%%%%%%%%%%%%%%%%%%%%%%%%%%%%%%%%
%   Referencing Scheme of Martin
%%%%%%%%%%%%%%%%%%%%%%%%%%
\newcommand{\refE}[1]{(\ref{E:#1})}
\newcommand{\refCh}[1]{Chapter~\ref{Ch:#1}}
\newcommand{\refS}[1]{Section~\ref{S:#1}}
\newcommand{\refSS}[1]{Section~\ref{SS:#1}}
\newcommand{\refT}[1]{Theorem~\ref{T:#1}}
\newcommand{\refO}[1]{Observation~\ref{O:#1}}
\newcommand{\refP}[1]{Proposition~\ref{P:#1}}
\newcommand{\refD}[1]{Definition~\ref{D:#1}}
\newcommand{\refC}[1]{Corollary~\ref{C:#1}}
\newcommand{\refL}[1]{Lemma~\ref{L:#1}}
\newcommand{\refEx}[1]{Example~\ref{Ex:#1}}
%%%%%%%%%%%%%%%%%% End Referencing Scheme %%%%%%%%%%%%%%%%
\newcommand{\R}{\ensuremath{\mathbb{R}}}
\newcommand{\C}{\ensuremath{\mathbb{C}}}
\newcommand{\N}{\ensuremath{\mathbb{N}}}
\newcommand{\Q}{\ensuremath{\mathbb{Q}}}
\renewcommand{\P}{\ensuremath{\mathcal{P}}}
\newcommand{\Z}{\ensuremath{\mathbb{Z}}}
%%%%%%%%%%%%%%%%%%%%%%%%%%%%%%%%%%%%%%%%%%
%%%%%%%%%%%%%%%%%%%%%%%%%%%%%%%%%%%%%%%%%%%%%
%%%%%%%%%%%%%%%%%%%%%%%%%%%%%%%%%%%%%%%%%%%%%%%%%%
\newcommand{\g}{\mathfrak{g}}
\newcommand{\h}{\mathfrak{h}}
%%%%%%%%%%%%%%%%%%%%%%%%%%%%%%%%%%%%%%%%%%%%%%%%%%%%
\newcommand{\gl}{\mathfrak{gl}}
\newcommand{\sln}{\mathfrak{sl}}
\newcommand{\so}{\mathfrak{so}}
\newcommand{\spn}{\mathfrak{sp}}
%%%%%%%%%%%%%%%%%%%%%%%%%%%%%%%%%%%%%%%%%%%%%%%%%%%%
\newcommand{\tr}{\mathrm{tr}}
\newcommand{\ord}{\operatorname{ord}}
\newcommand{\res}{\operatorname{res}}
\newcommand{\nord}[1]{:\mkern-5mu{#1}\mkern-5mu:}
\newcommand{\codim}{\operatorname{codim}}
\newcommand{\ad}{\operatorname{ad}}
\newcommand{\Ad}{\operatorname{Ad}}
\newcommand{\supp}{\operatorname{support}}
\define\im{\operatorname{Im}}
\define\re{\operatorname{Re}}
\redefine\deg{\operatornamewithlimits{deg}}
\define\rank{\operatorname{rank}}
\newcommand{\Aut}{\operatorname{Aut}}
\newcommand{\End}{\operatorname{End}}
\newcommand{\Hom}{\rm{Hom}\,}
%%%%%%%%%%%%%%%%%%%
\define\a{\alpha}
\redefine\d{\delta}
\newcommand{\w}{\omega}
\define\ep{\epsilon}
\redefine\b{\beta}
\define\ga{\gamma}
\renewcommand{\l}{\lambda}
\newcommand{\ka}{\kappa}
\renewcommand{\k}{\varkappa}
\define \s{\sigma}
\define\K{\mathcal K}
\define\U{\mathcal U}
\redefine\O{\mathcal O}
\define\A{\mathcal A}
\redefine\L{\mathcal L}
\newcommand{\M}{\mathcal{M}}
\newcommand{\F}{\mathcal{F}}
\redefine\D{\mathcal D^{1}}
\newcommand{\Ga}{\Gamma}
%%%%%%%%%%%%%%%%%%%%
%%%%%%%%%%%%%%%%%%%%%%%%%%
\define \nord #1{:\mkern-5mu{#1}\mkern-5mu:}
\newcommand{\MgN}{\mathcal{M}_{g,N}} %% moduli space
%%%%%%%%%% Olegs definitions %%%%%%%%%%%%%%%%%%%%%%%%%%%%%%%%%%%
\newcommand{\calF}{{\mathcal F}}
\newcommand{\ferm}{{\mathcal F}^{\infty /2}}
%%%%%%%%%%%%%%%%%%%%%%%%%%%%%%%%%%%%%%%%%%
%%%%%%%%%%%%  END of macrodefinitions
%%%%%%%%%%%%%%%%%%%%%%%%%%%%%%%%%%%%%%%%%

\title{Hitchin systems on hyperelliptic curves}
\author{P.I. Borisova, O.K. Sheinman }
\maketitle
\begin{abstract}
We describe a class of spectral curves and find explicit formulas for Darboux coordinates for hyperelliptic Hitchin systems corresponding to classical simple Lie groups. We consider in detail the systems with classical rank 2 gauge groups on genus 2 curves.
\end{abstract}
\tableofcontents
%%%%%%%%%%%%%%%%%%%%%%%%%%%%%%%%%%%%%%%%%%%%%%%%%%%%%%%%%%%%%%%%%%
\section{Introduction}
In the vast literature on Hitchin systems there are not so many works devoted to the fundamental question of separation of variables for them. With this relation, we can only mention \cite{Gaw,Hur,GNR}. The first one treats in detail the system on a genus 2 curve with $SL(2)$ as a gauge group, and in the two others some general approaches to the problem are discussed. The separation of variables becomes effective while making use of the family of spectral curves \cite{DKN,Skl,BT2}. General properties of spectral curves depending on a gauge group have been formulated in the pioneering work by Hitchin \cite{Hitchin}.

A description of spectral curves and separation of variables for hyperelliptic Hitchin systems  and gauge groups  of the series $A_l$, $B_l$, $C_l$ has been given in \cite{Sh_FAN_2019}. These are the classical series for which the spectral curve is nonsingular, and the Hamiltonians can be found out explicitly in terms of separation variables. It is observed in \cite{Sh_FAN_2019} that the $D_l$ series is specific with this regard. For the group $SO(4)$ as a simplest representative of this series the separation of variables is carried out in \cite{PBor}. We reproduce these results in the present paper with the corresponding references. As new results, we prove here a holomorphy of differentials of angle coordinates on the normalization of the spectral curve in the case it is singular (it is just the case of $D_l$) and find the list of basis holomorphic Prym differentials for the systems of type $D_l$. Together with our previous results \cite{Sh_FAN_2019} it gives the full list of basis holomorphic differentials (resp., Prym differentials in the case of spectral curves with involution ) on a generic leaf of the Hitchin foliation for every classical simple group. The rank one and two cases, i.e. the groups $SL(2)$, $SO(4)$, $Sp(4)$ and $SO(5)$ are considered in detail. In \refS{concl} we discuss the relation between our results in the $SL(2)$ case and the corresponding results in \cite{Gaw}. Observe that the description of the class of spectral curves we have given in \cite{Sh_FAN_2019}, relies on the Lax representation of Hitchin systems, proposed and investigated in \cite{Kr_Lax} (see \cite{Kr_Sh_FAN,Shein_UMN2016,Sh_DGr,Sh_TMPh} for arbitrary simple gauge groups). Here, we derive this description from the properties of Higgs fields.

In \refS{Hitchin} we introduce Hitchin's systems following the lines of his pioneering work \cite{Hitchin}, and give a description of spectral curves for the systems on hyperelliptic curves, for all classical root systems.

In \refS{separ} we give an alternative definition of Hitchin systems in terms of Separation of Variables. We find Darboux coordinates and investigate properties of the angle-type variables.

In \refS{Rank2} we consider the rank 1 and 2, and genus 2 examples, i.e. the systems with gauge groups $SL(2)$, $SO(4)$, $Sp(4)$ и $SO(5)$. We give the detailed data of their spectral curves, namely genus, number of sheets, number of branch points, gluing scheme, action of the involution (if any), list of the basis holomorphic differentials (resp., Prym differentials), the form of the Prym mapping. In the cases $SL(2)$, $SO(4)$ we find out the action--angle variables.

In \refS{concl} we discuss the relation between our results in the case $SL(2)$ and the corresponding results in \cite{Gaw}, and also our following observation: for $SL(2)$, the Prymians of spectral curves turn out to be Jacobians (of some other curves).

\bigskip
The authors are grateful to I.A.Taimanov for the discussion of Prymians of singular curves, though it turned out to be beyond the framework of the final text. Among the works especially influenced our present work, we feel obliged to mention \cite{Hitchin,Kr_Lax,Previato}.
%%%%%%%%%%%%%%%%%%%%%%%%%%%%%%%%%%%%%%%%%%%%%%%%%%%%%%%%%%%%%%%%%%
\section{Hitchin systems on hyperelliptic curves}\label{S:Hitchin}
\subsection{Hitchin systems}
Define Hitchin systems following the lines of \cite{Hitchin}.

Assume, $\Sigma$ is a compact genus $g$ Riemann surface with a conformal structure, $G$ is a complex semisimple Lie group, $\g = Lie(G)$, $P_0$ is a smooth principal $G$-bundle on~$\Sigma$.

By \emph{holomorphic structure} on $P_0$ we mean a connection of type $(0, 1)$, i.e. a differential operator on the sheaf of sections of the bundle $P_0$. Locally, the operator is given as $\bar{\partial} + \w$, where $\w \in \Omega^{0, 1}(\Sigma, \g)$ and a gluing function $g$ acts on $\w$ by the gauge transformation:
\[
\w \to g\w g^{-1} - (\bar{\partial}g)g^{-1}.
\]

Suppose $\A$ is a space of semistable \cite{Hitchin} golomorphic structures on $P_0$, ${\mathcal G}$ is a group of smooth global gauge transformations. The quotient ${\mathcal N} = \A/{\mathcal G}$ is called \emph{moduli space of stable holomorphic structures on $P_0$}. Further on, we will consider ${\mathcal N}$ as a configuration space of a Hitchin system. A point in ${\mathcal N}$ is a principal holomorphic $G$-bundle on $\Sigma$. The dimension of ${\mathcal N}$ is $\dim{\mathcal N} = \dim\g\cdot (g-1)$.

By definition \cite{Hitchin}, the \emph{phase spase} of a Hitchin system is $T^{*}({\mathcal N})$. According to Codaira--Spenser theory, $T_P({\mathcal N}) \simeq H^1(\Sigma, \Ad P)$. By Serre duality
\[
  T^*_P({\mathcal N})\simeq H^0(\Sigma, \Ad P\otimes {\mathcal K})
\]
where ${\mathcal K}$ is a canonical class of $\Sigma$, $\Ad P\otimes {\mathcal K}$ is a holomorphic vector bundle with a fiber $\g\otimes\C$. We denote the points of $T^*({\mathcal N})$ by $(P, \Phi)$, where $P \in {\mathcal N}, ~\Phi\in H^0(\Sigma, \Ad P\otimes {\mathcal K})$. Sections of the sheaf $T^*({\mathcal N})$ are called {\it Higgs fields}.

Assume, $\chi_d$ is a homogeneous invariant polynomial on $\g$ of degree $d$. It defines a map $\chi_d(P)\colon H^0(\Sigma, \Ad P\otimes {\mathcal K})\to H^0(\Sigma, {\mathcal K}^d)$ for each $P \in {\mathcal N}$. Let $\Phi$ stay for a Higgs field, then we can define $\chi_d(P, \Phi) = (\chi_d(P))(\Phi(P))$. By that, to each point $(P, \Phi)$ of the phase space we have assigned an element of $H^0(\Sigma, {\mathcal K}^d)$. Suppose $\{\Omega^d_j\}$ is a basis in $H^0(\Sigma, {\mathcal K}^d)$, then $\chi_d(P, \Phi) = \sum H_{d, j}(P,\Phi)\Omega^d_j$, where an $H_{d, j}(P, \Phi)$ is a scalar function on $T^*({\mathcal N})$. For any $j$ and $d$ the function $H_{d, j}(P, \Phi)$ is called a {\it Hitchin's Hamiltonian}.
\begin{theorem}[\cite{Hitchin}]
All Hitchin Hamiltonians Poisson commute on $T^*({\mathcal N})$.
\end{theorem}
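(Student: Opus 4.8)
The plan is to realize the phase space $T^*(\mathcal{N})$ as a symplectic (Marsden--Weinstein) reduction of a flat infinite-dimensional symplectic space, and to show that the Hitchin Hamiltonians descend from functions that depend on the Higgs field alone; such functions turn out to Poisson commute already \emph{before} the reduction, for an essentially trivial reason. Concretely, I would let $\A$ be the affine space of holomorphic structures $\bar\partial+\w$ on $P_0$, modelled on $\Omega^{0,1}(\Sigma,\g)$, and form the cotangent bundle $T^*\A\simeq\A\times\Omega^{1,0}(\Sigma,\g)$, the pairing between $\Omega^{0,1}$ and $\Omega^{1,0}$ being $(\dot\w,\Phi)\mapsto\int_\Sigma\langle\dot\w\wedge\Phi\rangle$ for a fixed invariant form $\langle\,,\rangle$ on $\g$. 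Writing a point of $T^*\A$ as $(\w,\Phi)$ and a tangent vector as $(\dot\w,\dot\Phi)$, this space carries the canonical symplectic form
\[
\varpi\big((\dot\w_1,\dot\Phi_1),(\dot\w_2,\dot\Phi_2)\big)=\int_\Sigma\big(\langle\dot\Phi_1\wedge\dot\w_2\rangle-\langle\dot\Phi_2\wedge\dot\w_1\rangle\big).
\]

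Next I would record the reduction picture. The gauge group $\mathcal{G}$ acts on $T^*\A$ by the cotangent lift of its action on $\A$, and this action is Hamiltonian with moment map
\[
\mu(\w,\Phi)=\bar\partial_\w\Phi=\bar\partial\Phi+[\w,\Phi].
\]
The zero level $\mu^{-1}(0)$ is exactly the set of pairs $(\w,\Phi)$ for which $\Phi$ is holomorphic with respect to the holomorphic structure $\w$, i.e. $\Phi\in H^0(\Sigma,\Ad P\otimes\mathcal{K})$ with $P$ the bundle determined by $\w$; hence $\mu^{-1}(0)/\mathcal{G}\simeq T^*(\mathcal{N})$, and by the general reduction theorem the reduced symplectic form is the canonical one on $T^*(\mathcal{N})$. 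To exhibit the Hamiltonians as $\Phi$-functions, for a smooth test form $\nu\in\Omega^{1-d,1}(\Sigma)$ I set $f_{d,\nu}(\w,\Phi)=\int_\Sigma\nu\,\chi_d(\Phi)$; this is well defined since $\chi_d(\Phi)\in\Omega^{d,0}(\Sigma)$, it is $\mathcal{G}$-invariant since $\chi_d$ is $\Ad$-invariant, and it depends on $\Phi$ only. Restricted to $\mu^{-1}(0)$, where $\chi_d(\Phi)\in H^0(\Sigma,\mathcal{K}^d)$, the functional $\nu\mapsto\int_\Sigma\nu\,\chi_d(\Phi)$ is the Serre-duality pairing, so as $\nu$ ranges over representatives of $H^1(\Sigma,\mathcal{K}^{1-d})$ the $f_{d,\nu}$ yield all linear combinations of the coordinates $H_{d,j}$ of $\chi_d(\Phi)$ in the basis $\{\Omega^d_j\}$. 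Thus every Hitchin Hamiltonian is the descent of some $f_{d,\nu}$.

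The heart of the argument is then a one-line computation upstairs. Since $f_{d,\nu}$ depends on $\Phi$ alone, its differential annihilates the $\A$-directions, and from the explicit form of $\varpi$ the Hamiltonian vector field $X_{f_{d,\nu}}$ has vanishing $\Phi$-component (it points purely along $\A$). Consequently, for any two such functions,
\[
\{f_{d,\nu},f_{d',\nu'}\}=df_{d,\nu}\big(X_{f_{d',\nu'}}\big)=0,
\]
because $df_{d,\nu}$ sees only $\Phi$-directions while $X_{f_{d',\nu'}}$ has no $\Phi$-component. Invoking the standard fact that the Poisson bracket of two $\mathcal{G}$-invariant functions descends to the bracket of their reductions, it follows that the corresponding Hamiltonians $H_{d,j}$ Poisson commute on $T^*(\mathcal{N})$.

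I expect the main obstacle to be analytic rather than conceptual: all the spaces are infinite-dimensional and $\mathcal{G}$ is an infinite-dimensional group, so the existence and equivariance of $\mu$, the smoothness of the reduced space at stable points, and the validity of the reduction formula for Poisson brackets must be justified (or treated formally, as is customary in this subject). A secondary point requiring care is the identification $\mu^{-1}(0)/\mathcal{G}\simeq T^*(\mathcal{N})$ together with the claim that the family $\{f_{d,\nu}\}$ exhausts the $H_{d,j}$; once these are in place, commutativity is immediate from the triviality of the brackets of functions of $\Phi$ alone.
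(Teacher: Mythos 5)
Your argument is correct, and it is essentially the original one: the paper itself supplies no proof of this theorem, citing \cite{Hitchin}, and your reduction picture --- realizing $T^*({\mathcal N})$ as the symplectic quotient of $\A\times\Omega^{1,0}(\Sigma,\g)$ at the zero level of the moment map $\bar\partial_\w\Phi$, observing that the functions $\int_\Sigma\nu\,\chi_d(\Phi)$ depend on $\Phi$ alone and hence commute upstairs, and descending the brackets of $\mathcal G$-invariant functions --- is precisely the mechanism of the cited reference. The caveats you flag (infinite-dimensional analysis, exhaustion of the $H_{d,j}$ via Serre duality) are the right ones and are handled there in the same spirit.
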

%%%%%%%%%%%%%%%%%%%%%%%%%%%%%%%%%%
\subsection{Spectral curves of hyperelliptic Hitchin systems}\label{SS:num_int}
Here we define Hamiltonians in another way. We fix a holomorphic differential $\w$ on hyperelliptic curve $\Sigma$ and divide holomorphic sections of $\Ad P\otimes {\mathcal K}$ by it. By that, we obtain meromorphic sections of the bundle $\Ad P$ with a divisor of poles $-D$, where $D = (\w)$ is the divisor of zeros of differential $D$. In this way, $\chi_d(P)$ will be a map from $H^0(\Sigma, \Ad P, -D)$ to $\O(\Sigma, -dD)$. We will also consider the basis $\Omega^d_j$ as a basis of $\O(\Sigma, -dD)$.

Assume, $\Sigma$ is a hyperelliptic (in particular, non-singular) curve defined by the equation
\begin{equation}
y^2 = P_{2g+1}(x),~\text{where}~P_{2g+1}(x)= x^{2g + 1} + \sum\limits_{i=0}^{2g}a_ix^{i}.
\end{equation}

We choose $\w = \frac{dx}{y}$, hence $D = 2(g - 1)\cdot\infty$. The following lemma enables us to find out a basis of the space $\O(\Sigma, -dD)$.

\begin{lemma}[\cite{Sh_FAN_2019}]\label{L:lemma1}
The functions $\{1, x, \ldots, x^{d_i(g - 1)}\}$ and $\{y, yx, \ldots, yx^{(d_i - 1)(g - 1) - 2}\}$ form a basis of $\O(-dD)$, where $D = 2(g - 1)\cdot\infty$.
\end{lemma}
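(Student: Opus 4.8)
The plan is to verify that every listed function lies in $\O(-dD)$, that the listed functions are linearly independent, and that their number equals $\dim\O(-dD)$; a dimension count then forces them to be a basis. The whole argument rests on computing orders at the unique point $\infty$ over $x=\infty$. Since $\deg P_{2g+1}=2g+1$ is odd, $\Sigma$ has a single (Weierstrass) point at infinity, and a local parameter $t$ there satisfies $x=t^{-2}(1+O(t))$; substituting into $y^2=P_{2g+1}(x)$ gives $y=t^{-(2g+1)}(1+O(t))$. Hence $\ord_\infty x=-2$ and $\ord_\infty y=-(2g+1)$, so that
\[
\ord_\infty(x^k)=-2k,\qquad \ord_\infty(yx^k)=-(2g+1+2k).
\]
Away from $\infty$ both families are holomorphic, being polynomials in the functions $x,y$ regular on the affine curve, so each is a meromorphic function whose only pole is at $\infty$.

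Next I would check membership, i.e.\ that the pole order at $\infty$ does not exceed $\deg dD=2d(g-1)$. For $x^k$ this reads $2k\le 2d(g-1)$, giving $0\le k\le d(g-1)$; for $yx^k$ it reads $2g+1+2k\le 2d(g-1)$, giving $0\le k\le (d-1)(g-1)-2$ after rounding the resulting half-integer bound down. These are exactly the ranges in the statement, so all the listed functions belong to $\O(-dD)$.

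The key point for linear independence is that all these pole orders are \emph{distinct}: the orders $2k$ produced by the $x^k$ are even, the orders $2g+1+2k$ produced by the $yx^k$ are odd, and within each family they strictly increase. A nontrivial linear combination therefore has a well-defined pole order at $\infty$, namely the largest order occurring with nonzero coefficient, since the leading terms cannot cancel; in particular such a combination is not the zero function. Hence the functions are linearly independent.

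Finally I would count: the two ranges contribute $d(g-1)+1$ and $(d-1)(g-1)-1$ functions, for a total of $(2d-1)(g-1)=2d(g-1)-g+1$. On the other hand, for $d\ge 2$ and $g\ge 2$ we have $\deg dD=2d(g-1)>2g-2=\deg\K$, so the term $\ell(\K-dD)$ in Riemann--Roch vanishes and $\dim\O(-dD)=2d(g-1)-g+1$. The number of independent functions thus equals the dimension of the space, so they form a basis. I expect the only genuinely delicate part to be the local computation at $\infty$, i.e.\ pinning down $\ord_\infty y=-(2g+1)$ through the chosen uniformizer, together with the parity bookkeeping that keeps the even orders of the $x^k$ disjoint from the odd orders of the $yx^k$; once the orders are in hand, the rest is a direct application of Riemann--Roch.
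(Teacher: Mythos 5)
Your proof is correct and complete: the pole-order computation at the unique point at infinity ($\ord_\infty x=-2$, $\ord_\infty y=-(2g+1)$), the parity argument for independence, and the Riemann--Roch count $\ell(dD)=2d(g-1)-g+1$ (valid since $d_i\ge 2$ for simple Lie algebras, so $\deg dD>2g-2$) all check out, including the half-integer rounding that yields the bound $(d-1)(g-1)-2$. The paper itself defers the proof to \cite{Sh_FAN_2019}, and your argument is the standard one used there.
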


The \emph{spectral curve} of a Higgs field $\Phi$ is defined by the relation
\[
\det(\l - \Phi/\w) = 0.
\]
Locally, the value of $\Phi(P)$ is a $\g$-valued function on $\Sigma$. Denote its value at a point $(x, y) \in \Sigma$ by $\Phi(P, x, y)$. For a fixed $P$ we consider the equation of the spectral curve as a relation between $\l, x, y$. Letting $P$ to run over the moduli space ${\mathcal N}$, we obtain a family of spectral curves. In different local trivializations, evaluations of $\Phi$ are related by the group $\Ad G$ action, thereby the equation is well-defined.

Assume $d_1, \ldots, d_l, ~l = \rank\g$ be a set of degrees of basis invariant polynomials of the Lie algebra $\g$. For brevity, we will denote $\chi_{d_i}$ by $\chi_i$, and $H_{d_i, j}$ by $H_{i, j}$. The meromorphic functions on $\Sigma$ of the form $p_i = \chi_i(\Phi/\w)$, where $\chi_i~(i = 1, \ldots, l)$ are the basis invariant polynomials, will be referred as \emph{\it basis spectral invariants} (because they are invariant under Hitchin flows). The integer $d_i = \deg\chi_i$ is called \emph{the degree of the basis spectral invariant $p_i$}.

Consider a classical Lie algebra $\g$ in the standard representation. The equation of the spectral curve has the following form:
\begin{equation}\label{E:spec}
R(x, y, \l) = \l^n + \sum\limits_{i=1}^{l} r_i(x, y)\l^{n - d_i}=0,
\end{equation}
where $n$ is the dimension of the standard representation of the Lie algebra $\g$, $r_i~(i = 1, \ldots, l)$ are meromorphic functions on $\Sigma$. Thus, for a Lie algebra $\g$ of type $A_l$ we have $n = l+1, ~d_i = i+1$; for the type $B_l$ we have $n = 2l + 1, ~d_i = 2i$; for the type $C_l$ we have $n = 2l, ~d_i = 2i$, and $r_i = p_i$ are the basis invariants for all $i = 1, \ldots, l$. The case of Lie algebras of type $D_l$ is exceptional with this regard, namely, the coefficient $r_l$ of degree $2l$ in \refE{spec} is a square of the basis invariant of degree $l$ (which is nothing but the Pfaffian of $L$). Looking further forward, we remark that this is the reason why the equations for Hamiltonians in the method of separation of variables are non-linear for Lie algebras of type $D_l$.

A basis invariant $r_i$ of degree $d_i$ can be expanded over the basis from \refL{lemma1} as follows:
\begin{equation}\label{E:eq1} r_{i}(x, y, H) = \sum\limits_{k=0}^{d_i(g - 1)}H^{(0)}_{ik}x^k + \sum\limits_{s = 0}^{(d_i - 1)(g - 1) - 2}H^{(1)}_{is}yx^s,
\end{equation}
where $H^{(0)}_{ik}, ~H^{(1)}_{is}$ are independent Hamiltonians of the Hitchin system.
\begin{example}
A spectral curve for $\g = \sln(2)$. There is only one basis invariant $p_2 = r_2$ of degree $2$
\begin{equation}
r_2(x, y) = \sum\limits_{k = 0}^{2(g-1)}H^{(0)}_{k}x^k + \sum\limits_{s = 0}^{g-3}H^{(1)}_{s}yx^s
\end{equation}
(in particular, for $g=2$ the second sum is absent). The equation of the spectral curve has the form
\begin{equation}
\l^2 + r_2(x, y) = 0.
\end{equation}
\end{example}
\begin{example}
Spectral curve for $\g = \so(4)$ is defined by the following equation:
\begin{equation}
R(x, y, \l, H) = \l^4 +\l^2p + q = 0
\end{equation}
where $p$ and $q$ are the basis spectral invariants of degree 2.
\end{example}
For genus $2$ these examples will be considered  in \refS{Rank2} in detail.
%%%%%%%%%%%%%%%%%%%%%%%%%%%%%%%%%%%%%%%%%%%%%%%%%%%%%%%%%%%%%%%%%%%%%%%%%
\section{Separation of variables for hyperelliptic Hitchin systems}\label{S:separ}
\subsection{Separating variables. Symplectic and Poisson structure}
Denote the number of degrees of freedom of a Hitchin system by $N$. It is known \cite{Hitchin} that $N = \dim\g\cdot(g-1)$ provided $\g$ is a simple Lie algebra.

The spectral curve \refE{spec} is completely defined by a set of values of independent Hamiltonians, hence by $N$ points the spectral curve passes through. Denote these points by $(x_i, y_i, \l_i) ~(i = 1, \ldots, N)$, where $y^2 = P_{2g+1}(x)$ and $x_i, ~y_i, \l_i$ are related by the equations \refE{spec}, \refE{eq1} for every $i$. The variables $x_i$ and $\l_i$ are called \emph{separating variables}.

Define a $2$-form $\s$ as follows:
\begin{equation}\label{E:sympl}
\s = \sum\limits_{i=1}^{N}d\l_i\wedge\frac{dx_i}{y_i}.
\end{equation}
From Theorem 4.3 \cite{Kr_Lax}, one can derive that the form $\s$  gives a symplectic structure identical to that for a Hitchin system.

\begin{remark}
To restore a Hithin system from the above data one should consider the points $(x_i, y_i, \l_i)$ as poles of an eigenfunction of the Lax operator of a Hitchin system, and use the inverse scattering method following the lines of \cite{Kr_Lax}. Certainly, the relation \refE{sympl} is nothing but a reformulation of Theorem 4.3 \cite{Kr_Lax} for the case of a hyperelliptic base curve. In particular, the $\frac{dx_i}{y_i}$ in \refE{sympl} come from the differential $\frac{dx}{y}$, which was fixed in our definition of  Hamiltonians (and of the symplectic structure in \cite{Kr_Lax} as well). A similar expression for the symplectic form occurs also in \cite{DW} for elliptic curves.

Summarizing the above, the phase space of a Hitchin system on a hyperelliptic curve $\Sigma$ is constituted by unordered sets of triples of the following form:
\[
\{(x_i, y_i, \l_i)| i = 1, \ldots, N\},
\]
with the symplectic structure given by \refE{sympl}.

The corresponding Poisson structure is given by the relation
\begin{equation}
\{\l_i, x_j\} = \delta_{ij}y_i.
\end{equation}
\end{remark}
%%%%%%%%%%%%%%%%%%%%%%%%%%%%%%%%%%%%%%%%%%%%%%%%%%%%%%%%%%%%%

\subsection{The form of Hamiltonians in separating variables}
We find the Hamiltonians using the fact that the spectral curve passes through the points $(x_i, y_i, \l_i)$. It can be expressed as a system of equations
\begin{equation}\label{E:syst}
R(x_i, y_i, \l_i, H) = 0, ~ i=1, \ldots, N
\end{equation}
with unknowns
\begin{align*}
H = \{H^{(0)}_{jk}, H^{(1)}_{js}|& j=1, \ldots, l; k=0, \ldots, (2d_j - 1)(g-1);\\
   &  s=0,\ldots, (d_j - 1)(g - 1) - 2\}.
\end{align*}
We will call the equations \refE{syst} {\it the separation relations}. This system is linear for Lie algebras of the types $A_l, ~B_l, ~C_l$ which follows from the relations \refE{spec}, \refE{eq1}, so the Hamiltonians can be explicitly found in terms of the separating variables by Kramer's rule.

\begin{example}
For $\g = \sln(2)$ the system of equations has the following form:
\begin{equation}
\l_i^2 + \sum\limits_{k=0}^{2(g-1)}H^{(0)}_kx^k_i +\sum\limits_{s=0}^{g - 3}H^{(1)}_sx^s_i = 0, ~i = 1, \ldots, 3(g-1).
\end{equation}
In this way, $H^{(0)}_k = D^{(0)}_k/D$, $H^{(1)}_s = D^{(1)}_s/D$, where
\[
D =
\left |
\begin{array}{cccccc}
    1 & \ldots & x^{2(g - 1)_1} & y_1 & \ldots & y_1x^{g - 3}_1 \\
   \vdots & \vdots & \vdots & \vdots & \vdots & \vdots\\
    1 & \ldots & x^{2(g - 1)}_{3(g - 1)} & y_{3(g - 1)} & \ldots & y_{3(g - 1)}x^{g - 3}_{3(g - 1)}
\end{array}
\right |,
\]
$D^{(0)}_k$ is obtained  by replacing $x^k_i$ with $(-\l^2_i)$ in the $k$-th column of the determinant $D$, and $D^{(1)}_s$ is obtained  by replacing $y_ix^k_i$ with $(-\l^2_i)$ in the $(2g - 1 + k)$-th column of $D$.
\end{example}

In the case of $D_l$ series the system of separation relations is quadratic. In \cite{PBor}, it was shown that the system is unsolvable in radicals  except for $l = 2$ (which corresponds to the Lie algebra $\so(4)$). The case $l = 2$ descends to resolving an algebraic equation of degree 4 with one unknown (see  \refS{Rank2}). Note that the isomorphism $\so(4) \cong \so(2)\times \so(2)$ does not simplify the problem since it is an outer isomorphism not preserving the spectral curve.
%%%%%%%%%%%%%%%%%%%%%%%%%%%%%%%%%
\subsection{Darboux coordinates}
We will use also the following through enumeration of the Hamiltonians: $H=(\ldots, H_j, \ldots)$. Our next goal is to find the coordinates $\phi_j$ conjugate to Hamiltonians. These are coordinates in a covering of a generic leaf of the Lagrangian foliation $H = const$ (which is called {\it Hitchin foliation} in the case of Hitchin systems). They can be found out in a standard way by means of the technique of \emph{generating functions} \cite{Arnold}. Finally, we obtain the following result:
\begin{equation}\label{E:angle}
\phi_j = \sum\limits_{i=0}^{N}\int\limits^{(x_i, y_i,\l_i)}\w_j, ~\text{where } \w = \frac{\partial R/\partial H_j}{\partial R/\partial \l}\frac{dx}{y}
\end{equation}
(the calculation comes back to \cite{Hur}, we refer to \cite{Sh_FAN_2019} for details; see also \cite[eq. (4.61)]{Kr_Lax}).

The coordinates $(H_j, \phi_j)$ possess the Darboux property that immediately follows from the method of generating functions. In case the equation of a spectral curve is linear in $H$-coordinates, it follows also from the results of the articles \cite{BT2,DT}. For the root system $D_l$ these results do not work.

Next, we consider properties of Abelian differentials $\w$.
\begin{proposition}\label{P:h_diff}
Assume that a spectral curve has no worse than simple singularities, and projections of its ramification divisors and of its divisor of singularities do not intersect with the ramification divisor of the base curve over $\C P^1$. Then, away from the infinity, the differentials $\w_j$ are holomorphic at smooth points of the spectral curve. If the spectral curve is singular, their pull-back onto the normalization results in holomorphic differentials (away from the infinity).
\end{proposition}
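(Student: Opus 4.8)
The plan is to analyze the differential $\w_j = \frac{\partial R/\partial H_j}{\partial R/\partial \l}\frac{dx}{y}$ on the spectral curve by controlling the order of vanishing of numerator and denominator locally, and to check that the zeros of $y$ and the ramification of the spectral-to-base projection (where $\partial R/\partial\l$ vanishes) do not produce poles. First I would work at a smooth point of the spectral curve lying over a generic point of $\C P^1$, where $(x,y)$ is unramified over $x$ and $y\neq 0$: here $\frac{dx}{y}$ is a regular nonvanishing local parameter differential, $\partial R/\partial H_j$ is a polynomial in $x,y,\l$ (by \refE{eq1} it is just a monomial $x^k$ or $yx^s$, hence holomorphic), and the only danger is a zero of $\partial R/\partial\l$ in the denominator. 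By the implicit function theorem, $\partial R/\partial\l\neq 0$ is exactly the condition that $\l$ is a local coordinate transverse to the base; so the apparent poles of $\w_j$ sit precisely on the ramification divisor of the projection $\Sigma_{\mathrm{spec}}\to\C P^1$ and, separately, over the zeros of $y$ (the ramification of the base).

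Next I would treat these two loci using the two hypotheses of the Proposition. At a ramification point of the spectral curve over $\C P^1$ where $\partial R/\partial\l$ has a simple zero, I would introduce a local coordinate $t$ with $x-x_0\sim t^2$, so that $dx\sim 2t\,dt$ supplies a compensating simple zero; the hypothesis that the ramification divisor of the spectral curve does not meet the branch points of the base over $\C P^1$ guarantees that there $y\neq 0$, so $\frac{dx}{y}$ contributes the zero of $dx$ cleanly and cancels the simple pole coming from $\partial R/\partial\l$. Symmetrically, over a branch point of the base (a zero of $y$), the non-intersection hypothesis guarantees the spectral projection is unramified, so $\partial R/\partial\l\neq 0$ there; then I would use $y^2=P_{2g+1}(x)$, which gives $x-x_0\sim y^2$ and hence $\frac{dx}{y}\sim 2\,dy$, manifestly holomorphic. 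Thus at every smooth point away from infinity the combination is holomorphic, which proves the first assertion.

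For the singular case I would pass to the normalization $\nu\colon\widetilde{\Sigma}_{\mathrm{spec}}\to\Sigma_{\mathrm{spec}}$ and study $\nu^*\w_j$ at a point lying over a singularity. Under the standing assumption that the singularities are no worse than nodes (simple singularities) and that the singular divisor projects away from the branch points of the base, near such a point two smooth local branches cross; on each branch $\l$ is again a good coordinate over $x$, and since $y\neq 0$ there the earlier local analysis applies branch by branch. The pull-back $\nu^*\left(\frac{\partial R/\partial H_j}{\partial R/\partial\l}\frac{dx}{y}\right)$ restricted to a single branch is then holomorphic by the smooth-point argument, because on one branch $\partial R/\partial\l$ does not vanish identically even though the full $\partial R/\partial\l$ vanishes at the node from the crossing of the other branch. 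The step I expect to be the main obstacle is precisely this last bookkeeping: I must verify that the zero of $\partial R/\partial\l$ at a node is entirely accounted for by the transverse branch and does not impose an extra vanishing condition along the branch being examined, so that no genuine pole survives after normalization. Handling this requires writing $R$ in factored local form $R\sim(\l-\l_1(x))(\l-\l_2(x))\cdot(\text{unit})$ near the node, computing $\partial R/\partial\l$ on each factor, and confirming the compensation is exact; the simple-singularity and non-intersection hypotheses are exactly what make this local factorization and its transversality available.
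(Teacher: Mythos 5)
Your proposal is correct and follows essentially the same route as the paper's proof: holomorphy at generic smooth points because the numerator is polynomial and $dx/y$ is holomorphic, cancellation of the zero of $R'_{\l}$ against the zero of $dx$ at ramification points (the paper gets this directly from $R'_x\,dx+R'_{\l}\,d\l=0$, i.e. $dx/R'_{\l}=-d\l/R'_x$, which also covers non-simple ramification without introducing the local coordinate $t$), with the non-intersection hypothesis guaranteeing $y\neq 0$ there, and a branch-by-branch analysis at the nodes after normalization. The ``main obstacle'' you flag resolves in one line exactly as you anticipate: writing $R=g_1g_2$ locally, on the branch $g_1=0$ both $\partial R/\partial H_j=(g_1)'_{H_j}g_2+g_1(g_2)'_{H_j}$ and $\partial R/\partial\l=(g_1)'_{\l}g_2+g_1(g_2)'_{\l}$ reduce to their first terms, the common factor $g_2$ cancels, and one is left with $\bigl((g_1)'_{H_j}/(g_1)'_{\l}\bigr)\,dx/y$, which is holomorphic by the smooth-point argument.
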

\begin{proof}
If $R'_{\l} \neq 0$, then $\w_j$ is holomorphic because $dx/y$ is holomorphic and $\partial R/\partial H_j$ is a polynomial in  $x, ~y, ~\l$. If $R'_{\l} = 0$, i.e. the point $(x, ~y, ~\l)$ is a finite branch point, and, moreover, this point is non-singular, then $R'_x \neq 0$. From the equation $R'_xdx + R'_{\l}d\l = 0$ we obtain $\frac{dx}{R'_{\l}} = -\frac{d\l}{R'_x}$, and the right hand side of the relation is holomorphic. Therefore the left hand side is also holomorphic. In addition, by assumption, this point is not a branch point for the base curve, hence $y \neq 0$. Hence at a finite branch point $\w_j$ is holomorphic for every $j$.

In the case of a singular point, it is a simple singularity by assumption. Locally, in the neighborhood of the singularity,  the equation of the curve has the form $R = g_1\cdot g_2$. Normalization of the curve splits into two smooth branches $\eta_1, ~\eta_2$ (see Fig.~\ref{nodal_pic}).
%%%%%%%%%%%%%%%%%%%%%%%%%%%%%%%%%
% рис. - сингулярная особенность
\begin{figure}[h]
\begin{picture}(150,60)
\unitlength=0.6pt
\thicklines
\qbezier(50,90)(130,80)(180,30)% \eta_1
\put(25,88){$\eta_1$}
\put(185,28){$\eta_1$}
\qbezier(80,100)(130,80)(150,30)% \eta_2
\put(93,100){$\eta_2$}
\put(125,30){$\eta_2$}
\end{picture}
\caption{}\label{nodal_pic}
\end{figure}
%%%%%%%%%%%%%%%%%%%%%%%%%%%%%%%%%%%

\noindent
The first branch is given by the relation $g_1 = 0$, and $g_2 \neq 0$, and for the second branch $g_1 \neq 0$, $g_2 = 0$.

On the first branch, we have
\[
\w_j = \frac{(g_1)'_{H_j}g_2 + g_1(g_2)'_{H_j}}{(g_1)'_{\l}g_2 + g_1(g_2)'_{\l}}\frac{dx}{y}|_{\eta_1} = \frac{(g_1)'_{H_j}}{(g_1)'_{\l}}\cdot\frac{dx}{y}.
\]
The latter differential is holomorphic (locally, in a neighborhood of the point) for the same reason as in the first part of the theorem. Similarly, pull-back of $\w_j$ onto the second branch of the normalizing curve is also holomorphic. In this way, we obtain that differentials $\w_j$ are globally holomorphic on the normalized curve (outside infinity).
\end{proof}

\begin{proposition}\label{P:prop1}
$1^\circ$. In the case of series $A_l, ~B_l, ~C_l$ the full list of the differentials $\w_j$ is as follows:
\begin{equation}\label{E:diff1}
\frac{x^k\l^{n - d_i}dx}{R'_{\l}(x, y, \l)y}(0\leq k \leq d_i(g-1))\text{~and~}\frac{x^s\l^{n - d_i}dx}{R'_{\l}(x, y, \l)}(0\leq s \leq (d_i - 1)(g-1) - 2),
\end{equation}
where $i = 1, \ldots, l$. For the series $D_l$ the differentials are the same for $i < l$ and for $i=l$ they are as follows:
\begin{equation}\label{E:diff2}
\frac{x^kqdx}{R'_{\l}(x, y, \l)y}(0\leq k \leq l(g-1))\text{~and~}\frac{x^sqdx}{R'_{\l}(x, y, \l)}(0\leq s \leq (l - 1)(g-1) - 2)
\end{equation}
where $q = q(x, y)$ is a Pfaffian of the Lax operator.
\vskip4pt
$2^\circ$. The differentials $\w_j$ are holomorphic at infinity. In the case of the systems $A_l$ ($l \geq 2$), together with the differentials $\frac{x^pdx}{y}$ ($p=0,1,\ldots,g-1$) lifted from the base, they form a basis of holomorphic differentials on the spectral curve. For the systems $A_1, ~B_l,~C_l$ they form a basis of holomorphic Prym differentials on the spectral curve with respect to the involution $\l \to -\l$. In the $~D_l$ case they form a basis of holomorphic Prym differentials on the normalization of the spectral curve.
\end{proposition}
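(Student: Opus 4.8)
The plan is to treat the two parts in turn: derive the explicit list $1^\circ$ from the defining formula \refE{angle}, and then read off holomorphy at infinity and the dimension count needed for $2^\circ$. For $1^\circ$ I would simply substitute \refE{spec} and \refE{eq1} into $\w_j=\frac{\partial R/\partial H_j}{\partial R/\partial\l}\frac{dx}{y}$. Since $R=\l^n+\sum_i r_i\l^{n-d_i}$ with each $r_i=\sum_k H^{(0)}_{ik}x^k+\sum_s H^{(1)}_{is}yx^s$ linear in the Hamiltonians, we have $\partial R/\partial H^{(0)}_{ik}=x^k\l^{n-d_i}$ and $\partial R/\partial H^{(1)}_{is}=yx^s\l^{n-d_i}$; dividing by $R'_\l$ and by the $y$ in $dx/y$ reproduces \refE{diff1} for $A_l,B_l,C_l$ and for the $i<l$ part of $D_l$. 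For $D_l$ and $i=l$ the coefficient $r_l$ equals $q^2$, so $\partial r_l/\partial H=2q\,\partial q/\partial H$ and the Pfaffian $q$ (a degree-$l$ invariant) carries these Hamiltonians; this yields the numerators $x^k q$ and $yx^s q$ of \refE{diff2}, up to the irrelevant factor $2$. This step is a direct computation.

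Holomorphy away from infinity is already Proposition \refP{h_diff}, so for $2^\circ$ only the points over $x=\infty$ remain. I would introduce the local parameter $t$ at the (ramified) point of $\Sigma$ over $x=\infty$, with $x\sim t^{-2}$, $y\sim t^{-(2g+1)}$, so $\ord_t(dx/y)=2g-2$. On a generic leaf the leading coefficients of the $r_i$ and of $q$ are nonzero, whence $\l\sim\mu_a x^{g-1}$ with $n$ distinct slopes $\mu_a$; the spectral cover is then unramified over $\infty$ and $t$ is a local parameter upstairs as well. Adding the orders of $x^k$, $\l^{n-d_i}$ (or $q$), $R'_\l\sim\l^{n-1}$, $y$ and $dx$ yields $\ord_t\w_j=2\bigl(d_i(g-1)-k\bigr)$ for the first family and $\ord_t\w_j=2(g-1)(d_i-1)-2s-3$ for the second (with $d_i$ replaced by $l$ in the $q$-case). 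Nonnegativity of these orders is precisely the index ranges $k\le d_i(g-1)$ and $s\le(d_i-1)(g-1)-2$, so the stated bounds are exactly those making $\w_j$ holomorphic at infinity, the top index giving a nonvanishing differential.

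It remains to show these are a basis. First the count: there is one $\w_j$ per Hamiltonian, and summing $(2d-1)(g-1)$ over the basic invariants — using degree $l$, not $2l$, for the $D_l$ Pfaffian — gives exactly $N=\dim\g\,(g-1)$ in every case. Linear independence follows from independence of the numerators: a vanishing combination, after multiplying by $R'_\l$ and grouping by powers of $\l$, forces for each $i$ a relation among $\{x^k\}\cup\{yx^s\}$, impossible since these are a basis of $\O(-d_i D)$ by \refL{lemma1}; the pullbacks $x^p\,dx/y$ contribute the top power $\l^{n-1}$ through $R'_\l$ and are thus independent of all $\w_j$. A parity check under $\tau\colon\l\mapsto-\l$ then shows $\tau^*\w_j=-\w_j$ for $A_1,B_l,C_l,D_l$ (the numerator and $R'_\l$ have opposite parities, and $q$ is $\tau$-invariant), so the $\w_j$ are Prym differentials, whereas for $A_l$ with $l\ge2$ there is no such involution and the $g$ pullbacks $x^p\,dx/y$ must be adjoined. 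Finally the count is matched to the relevant dimension: the Hitchin fibre has dimension $N$ and is a torsor over the Jacobian of the spectral curve for $A_l$ (of genus $\hat g=N+g$), and over the Prym of the spectral curve, respectively of its normalization in the $D_l$ case, of dimension $N$, for $A_1,B_l,C_l,D_l$; together with independence this forces a basis.

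The delicate point, and the one I expect to be the main obstacle, is the singular $D_l$ case. There one must pass to the normalization $\widetilde\Sigma$, use $r_l=q^2$ to locate the nodes (the two sheets meet where $\l=0$ and $q=0$, since $R'_\l$ vanishes identically at $\l=0$ and $R(x,y,0)=q^2$), check that they avoid infinity — where $q\to\infty$ — and the base branch locus, so that both Proposition \refP{h_diff} and the infinity computation apply, and verify that the Prym dimension of $\widetilde\Sigma$ is still exactly $N$. The $B_l$ zero-section component arising from the factorization $R=\l\,S$ is handled by noting that every numerator contains a positive power of $\l$, so each $\w_j$ vanishes on $\l=0$ and the counting genuinely takes place on the reduced curve $S=0$.
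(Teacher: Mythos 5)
Your argument is correct and follows essentially the same route as the paper: differentiate $R$ with respect to the Hamiltonians to obtain the explicit list, verify holomorphy at infinity through the local orders of $x$, $y$, $\l$, $q$ and $R'_{\l}$ (using $\l\sim\mu x^{g-1}$ and $q\sim t^{-2l(g-1)}$), and match the number of differentials against Hitchin's dimension of the Jacobian, respectively of the Prym variety or the Prym of the normalization in the $D_l$ case. The paper compresses this by delegating the computations to \cite{Sh_FAN_2019} and \cite{Hitchin}, whereas you carry them out explicitly together with the linear-independence and parity checks; the only cosmetic slip is that for the second family the top index $s=(d_i-1)(g-1)-2$ yields a differential with a simple zero at infinity rather than a nonvanishing one.
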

\begin{proof}
In the cases $A_l$, $B_l$, $C_l$ it was shown in \cite{Sh_FAN_2019}, in particular the formulae \refE{diff1} are obtained there. The formulae \refE{diff2} follow from \refE{angle} in a similar way.

For the system $D_l$ one should check the holomorphy of differentials \refE{diff2}. As in the proof of a Proposition 3.4 \cite{Sh_FAN_2019}, $q\l^{-1}$ is holomorphic at infinity, hence $q \sim z^{-2l(g-1)}$. Plugging the last, together with the asymptotic expressions  for the other components \cite{Sh_FAN_2019}, we obtain the statement.

In \cite{Hitchin} it was shown that the dimension of the Jacobian of a spectral curve is equal to $n^2(g-1)+1$ ($n=l+1$) in the case of $A_l$, and the dimension of its Prym variety (or Prym variety of its normalization) in other cases is equal to $\dim\g\cdot(g-1)$. It matches the number of differentials $\w_j$ (joined with the differentials lifted from the base in the case $A_l,~l\ge 2$) as it was essentially shown in Proposition 2.3($3^\circ$)\cite{Sh_FAN_2019}.
\end{proof}
%%%%%%%%%%%%%%%%%%%%%%%%%%%%%%%%%%%%%%%%%%%%%%
\subsection{Action-angle coordinates}\label{SS:du}
Among Darboux coordinates, the algebraic-geometric action-angle coordinates play a special role. By \emph{algebraic-geometric angle coordinates} we mean coordinates of the form \refE{angle} subject to the condition that the corresponding differentials are holomorphic and normalized. The angle coordinates are coordinates in a generic leaf of the Hitchin foliation. By \emph{action coordinates} we mean the conjugate coordinates. For $\g=\gl(n)$ the \refE{angle} is nothing but the Abel transform, and for $\g=\sln(2),\so(2l),\so(2l+1),\spn(2l)$ it is the Prym transform of the points in the phase space.

Consider a problem of finding the algebraic-geometric angle coordinates in the case of a smooth spectral curve, i.e for Lie algebras $\sln(n), ~\spn(2l)$ and $\so(2l+1)$ (in the last case we take the non-trivial irreducible component of the spectral curve). According to \refP{prop1}, the differentials listed in assertion $1^\circ$ of the proposition form a basis of holomorphic differentials (holomorphic Prym differentials). As it was shown in \cite{Sh_FAN_2019}, the normalization of this basis is given by a transformation of the form
\[
\w \to A\w, ~H\to HA^{-1},
\]
where $\w = (\w_1, \ldots, \w_N)^T, ~H=(H_1, \ldots, H_N)$ ($^T$ stands for ``transposed"). At the same time, the Darboux property holds true. A standard expression for the matrix $A^{-1}$ is as follows
\begin{equation}\label{E:invmatr}
A^{-1} = \left(2\int\limits_{l_i} \w_k\right)_{i, k = 1, \ldots, N}.
\end{equation}
Here, $l_i~ (i = 1, \ldots, N)$ are the cuts connecting pairs of branch points. According to \cite{Sh_FAN_2019}, the spectral curve can be represented as a result of gluing the $n$ copies of the base curve (\emph{the sheets}) along $\nu/2$ cuts (where $\nu$ is a number of branch points of the spectral curve over the base curve) connecting some pairs of branch points (for an arbitrary pairing). However, independent cycles are given by just $\nu/2 - n + 1$ cuts (denote them by $l_1, \ldots, l_{\nu/2 - n + 1}$). Every sheet is obtained by gluing two copies of $\C P^1$ along $g + 1$ cuts (in this way, we obtain $g$ independent cycles on each sheet). A total number of the cycles is $\nu/2 - n + 1 + ng = \hat{g}$ (by the Riemann-Hurwitz formula). We can arbitrarily choose $N$ independent ones of them. The schemes of gluing the spectral curves from the sheets are given in the next section.

As follows from above, to compute the matrix $A$ (that would resolve the problem of normalization) we need to know branch points of the spectral curve over a base curve, and branch points of the base curve over $\C P^1$. The latter are known but the first bumps into resolving the system of equations $R(\l, x, y) = 0, ~R'_{\l}(\l, x, y) = 0$. In general, it is a system of algebraic equations which turns out to be too complicated to express the matrix $A$ in terms of coordinates $(\l_i, x_i, y_i)$ effectively. However for $\g = \sln(2)$ and $\g = \so(4)$ (and genus $2$) it is possible to solve this problem explicitly. These examples and other examples of the rank 2 Hitchin systems for classical Lie algebras will be considered in the next section.
%%%%%%%%%%%%%%%%%%%%%%%%%%%%%%%%%%%%%%%%%%

\section{Systems of rank 1 and 2 on hyperelliptic curves of genus 2}\label{S:Rank2}
By the rank of a system we mean the rank of the corresponding Lie algebra. Thus, in this section we will consider Hitchin systems for the Lie algebras $\sln(2), ~\so(4), ~\spn(4)$ and $\so(5)$. Ranks of the corresponding bundles are $2, ~4, ~4, ~5$, resp.

\subsection{Hitchin system for the Lie algebra $\sln(2)$ on a hyperelliptic curve of genus~2}\label{SS:sln2}
A Hitchin system for the Lie algebra $\sln(2)$ on a hyperelliptic curve of genus $g$ has $N = 3(g - 1)$ Hamiltonians. For $g = 2$ we obtain the system with $3$ Hamiltonians. Invariants of the Lie algebra $\sln(2)$ descend to only a second order invariant, hence a spectral curve for this case has the form
\begin{equation}
R(\l, x, y) = \l^2 + (H_0 + xH_1 + x^2H_2) = 0.
\end{equation}
The curve is smooth because equations of singular points  ($R'_{\l} = 0$ and $R'_x = 0$) descend to $r_2(x) = 0$ and $(r_2(x))' = 0$, and are incompatible unless $r_2(x)$ has a multiple root.

The spectral curve is a two-sheeted branch covering of the base curve. The branch points can be obtained as solutions of the equation $R'_{\l} = 0$, which descends to $r_2(x) = 0$. Due to the symmetry in $y$, the number of solutions of this equation, that is the number of branch points, is equal to 4. A genus $\hat{g}$ of the spectral curve can be obtained by means the Riemann-Hurwitz formula:
\[
2\hat{g} - 2 = 2(2g - 2) + 4.
\]
For $g = 2$ we obtain $\hat{g} = 5$.

In turn, the base curve is a 2-sheeted branch covering of a Riemann sphere. Thus we can consider the spectral curve as a 4-sheeted covering of the sphere. Moreover, there are 8 branch points at every sheet (6 ones coming from gluing the base curve, and two ones coming from gluing the two copies of the base curve). The Riemann surface obtained in this way is represented at the Figure \ref{sl2_pic}, where circles mean Riemann spheres and every line is a gluing along the cut connecting a pair of branch points (one can interpret the lines as tubes connecting the spheres):
%%%%%%%%%%%%%%%%%%%%%%%%%%%%%%%%%%%%%
% \input{sl2}
\begin{figure}[h]
\begin{picture}(150,50)
\unitlength=0.6pt
\thicklines
\thinlines
\put(10,20){\circle{20}}
\put(10,80){\circle{20}}
\put(10,30){\line(0,1){40}}
\put(3,28){\line(0,1){44}}
\put(16,28){\line(0,1){44}}
\put(80,20){\circle{20}}
\put(80,80){\circle{20}}
\put(80,30){\line(0,1){40}}
\put(73,28){\line(0,1){44}}
\put(86,28){\line(0,1){44}}
%\put(0,55){$P_1$}
%
\put(20,20){\line(1,0){50}}
\put(20,80){\line(1,0){50}}
\end{picture}
\caption{Spectral curve in the case $\sln(2)$, $g=2$}\label{sl2_pic}
\end{figure}

%%%%%%%%%%%%%%%%%%%%%%%%%%%%%%%%%%%%%

The spectral curve is invariant under the involution $\l \to -\l$ (at the Figure, it is a rotation for the angle $\pi$ with respect to the vertical axis). Fixed points of the involution correspond to the branch points, defined by the condition $\l = 0$. In other words, they correspond to the branch points of the spectral curve over the base curve (i.e. to the horizontal lines at the Figure).

The basis of Prym differentials, according to \refP{prop1}, is given by $\w_i = (x^idx)/(2y\l)$, $i = 0, 1, 2$. To calculate a matrix $A^{-1}$ with the help of the relation \refE{invmatr}, we can chose cuts in two ways. One way is to choose three cuts corresponding to vertical lines. Another way is to choose two cuts corresponding to vertical lines and a cut corresponding to a horizontal one. The lower limit of integration in \refE{angle}, by definition of the Prym map, is chosen at one of the fixed points.
%%%%%%%%%%%%%%%%%%%%%%%%%%%%%%%%%%
\subsection{Hitchin system for the Lie algebra $\so(4)$ on a hyperelliptic curve of genus~2}\label{SS:so4}
In this case, the spectral curve is singular. According to \cite{Hitchin}, the Hitchin foliation is a foliation with Prym varieties of normalized spectral curves as leaves.

The Hitchin system in question has $N = 6$ independent Hamiltonians. Lie algebra $\so(4)$ has two basis invariants $p$ and $q$. They both are the second degree invariants (one of them is the Pfaffian, we denote it by $q$). The spectral curve has the form
\begin{equation}\label{E:so4}
R(\l, x, y, H) = \l^4+ \l^2p+q^2 = 0,
\end{equation}
where $p = H_0+xH_1+x^2H_2$, $q = H_3+xH_4+x^2H_5$. Singular points are the solutions of the following system, supplemented by the equation \refE{so4}:
\begin{equation}\label{E:so4syst}
\left \{
\begin{array}{l}
     R'_{\l}(\l, x, y, H) = \l(4\l^2+2p) = 0,\\
     R'_{x}(\l, x, y, H) = \l^2\cdot p'_{x}+2q\cdot q'_{x} = 0.
     \end{array}
\right.
\end{equation}
In a generic position, the spectral curve has $4$ singular points given by the equations $\l=0$, $q = 0$.

In case of $\l^2 = -p/2$ and generic $H$ the second equation of the system \refE{so4syst} does not hold and, thus, we obtain branch points. By \refE{so4} they satisfy the following system of equations:
\begin{equation}\label{E:so4rp}
\left \{
\begin{array}{c}
     \l^2 = -\frac{p}{2},\\
     q = \pm\frac{p}{2}.
     \end{array}
\right.
\end{equation}
Due to symmetries in $y$ and $\l$ we obtain $16$ branch points. For a singular curve, the Riemann-Hutwitz formula gives a genus of its normalization (we go on denoting it by~$\hat{g}$), for this reason we have
\[
2\hat{g} - 2 = 4(2g - 2) + 16
\]
for our spectral curve, which gives $\hat{g} = 13$ for $g = 2$.

The normalized curve is a $4$-sheeted branch covering of the base curve. Having been considered as a branch covering of a sphere, it has 8-sheets, with 10 branch points on every sheet (6 points for gluing the curve of genus 2, and 4 points for gluing the two copies of thus obtained sheets ). The normalized curve is drown at the Figure \ref{so4_pic}.
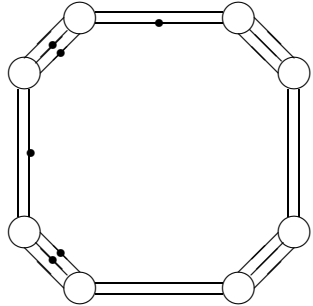
\begin{figure}[h]
\begin{picture}(200,130)
\unitlength=0.6pt
\thicklines
\thinlines
%нижняя половина
\put(55,20){\circle{20}}
\put(20,55){\circle{20}}
\put(47,28){\line(-1,1){20}}
\put(55,30){\line(-1,1){25}}
\put(45,20){\line(-1,1){25}}
\put(155,20){\circle{20}}
\put(190,55){\circle{20}}
\put(163,28){\line(1,1){20}}
\put(155,30){\line(1,1){25}}
\put(165,20){\line(1,1){25}}
\put(65,23){\line(1,0){80}}
\put(65,16){\line(1,0){80}}
%
% верхняя половина
\put(55,190){\circle{20}}
\put(20,155){\circle{20}}
\put(47,182){\line(-1,-1){20}}
\put(55,180){\line(-1,-1){25}}
\put(45,190){\line(-1,-1){25}}
\put(155,190){\circle{20}}
\put(190,155){\circle{20}}
\put(162,183){\line(1,-1){20}}
\put(155,180){\line(1,-1){25}}
\put(165,190){\line(1,-1){25}}
\put(65,187){\line(1,0){80}}
\put(65,194){\line(1,0){80}}
\put(23,65){\line(0,1){80}}
\put(16,65){\line(0,1){80}}
\put(193,65){\line(0,1){80}}
\put(186,65){\line(0,1){80}}
\put(105,187){\circle*{5}}
\put(24,105){\circle*{5}}
\put(43,168){\circle*{5}}
\put(38,173){\circle*{5}}
\put(43,42){\circle*{5}}
\put(38,37){\circle*{5}}
\end{picture}
\caption{The normalized spectral curve in the case $\so(4)$, $g=2$}\label{so4_pic}
\end{figure}
%%%%%%%%%%%%%%%%%%%%%%
The involution $\s\colon\l\to-\l$ looks at the Figure as a rotation around the center of the picture for the angle $\pi$. It has no fixed points. Inspite the singular points of the spectral curve are fixed, their preimages under the normalization map are permuted by the involution. At the Figure, the preimages are located in the middles of the tubes corresponding to the horizontal lines, the two ones on each. One can think of the normalization map as of the identification of points on the opposite horizontal lines.

As a consequence of \refE{angle} and \refP{prop1}, the basis of Prym differentials on the normalized curve is given by
\[
\w^{(0)}_i = \frac{x^{i-1}q(x)dx}{y\l(4\l^2+2p(x))},~ i = 1, 2, 3,
\]
\[
\w^{(0)}_i = \frac{\l x^{i-4}q(x)dx}{y\l(4\l^2+2p(x))},~ i = 4, 5, 6
\]
(see \refE{diff1}, \refE{diff2}). The cuts needed for normalizing the basis are uniquely defined and correspond to the lines marked with dots at the Figure \ref{so4_pic}.

Let $\w$ stay for the set of normalized Prym differentials, $Q_1,Q_2$ stay for a pair of points permuted by the involution. We can write the Prym map in the following way:
\[
\eta(P) = \frac{1}{2}\int\limits_{\gamma}\w,
\]
where $\ga = 2\gamma_1 +\rho$, $\gamma_1$ is an arbitrary path from $Q_1$ to $P$, $\rho$ is a fixed path from $Q_2$ to $Q_1$, hence $\rho+\ga_1$ is a path from $Q_2$ to $P$, see Figure \ref{prym_pic}). The ambiguity in choosing the path  descends to choosing $\gamma_1$ and leads to change of the integral by a double lattice of periods. Thus, $\eta$ is a well-defined map to the Jacobian of the spectral curve. It remains to show that this map is skew-symmetric. For any path $\pi$ and a Prym differential $\w$ we have
$
\int\limits_{\pi}\w = -\int\limits_{\s(\pi)}\w.
$
Indeed, both sides of this relation are equal to $\int\limits_{\s(\pi)}\w^{\s}$: the left hand side by invariance of the integral under a change of variables, and the right hand side because the differential is skew-symmetric: $\w^{\s}=-\w$. We apply this to the proof of the relation $\eta(\s(P)) = -\eta(P)$. As soon as the path $\gamma_1$ from $Q_1$ to $P$ is chosen, take the path from $Q_1$ to $\s(P)$ in the form $\gamma' = \s(\rho) + \s(\gamma_1)$ (Figure~\ref{prym_pic}).
%%%%%%%%%%%%%%%%%%%%%%%%%%%%%%
%\input{Prym_pic}
\begin{figure}[h]
\begin{picture}(150,80)
\unitlength=0.6pt
\thicklines

\put(60,30){\circle*{6}}
\put(60,110){\circle*{6}}
\put(180,30){\circle*{6}}
\put(180,110){\circle*{6}}
\put(60,30){\line(0,1){80}}
\put(180,30){\line(0,1){80}}
\qbezier(60,30)(120,0)(180,30)
\qbezier(60,30)(120,60)(180,30)
%\put(60,30){\line(1,0){10}}
\put(110,40){\vector(1,0){20}}
\put(130,10){\vector(-1,0){20}}
\put(115,22){$\rho$}
\put(110,52){$\s(\rho)$}
\put(55,120){$P$}
\put(155,120){$\s(P)$}
\put(55,10){$Q_1$}
\put(175,10){$Q_2$}
\put(40,70){$\ga_1$}
\put(185,70){$\s(\ga_1)$}
\end{picture}
\caption{}\label{prym_pic}
\end{figure}
%%%%%%%%%%%%%%%%%%%%%%%%%%%%%
\noindent Then $2\eta(\s(P))=\int_{2\ga'+\rho}\w$. We have: $2\ga'+\rho=2\s(\ga_1)+\s(\rho)+(\rho+\s(\rho))=\s(\ga)+(\rho+\s(\rho))$. By the above remark $\int_{\rho+\s(\rho)}\w=0$, hence $2\eta(\s(P))=\int_{\s(\ga)}\w=-\int_\ga\w=-2\eta(P)$.

It is substantial that the system \refE{so4},\refE{so4rp} is biquadratic, for this reason the branch points can be found out explicitly. The Hamiltonians of the system also can be found out explicitly in this case due to the following statement.
\begin{proposition}[\cite{PBor}]
For $\g = \so(4)$, $g=2$ the system of separation equations descends to one algebraic equation of degree 4 in one variable, consequently it is solvable in radicals.
\end{proposition}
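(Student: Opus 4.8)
The plan is to exploit that, although the separation relations \refE{syst} for $\so(4)$ are genuinely quadratic, the quadratic part enters only through the Pfaffian $q$, so the system can be folded down to the intersection of two plane conics --- a configuration governed by a single quartic. First I would write out the six relations in the genus $2$ case explicitly. Since for $g=2$ the $y$-terms are absent, we have $p(x)=H_0+H_1x+H_2x^2$ and $q(x)=H_3+H_4x+H_5x^2$, and \refE{so4} evaluated at the six separation points gives
\begin{equation}
\l_i^4+\l_i^2\,p(x_i)+q(x_i)^2=0,\qquad i=1,\dots,6.
\end{equation}
The crucial structural remark is that each relation is \emph{linear} in the three coefficients $H_0,H_1,H_2$ of $p$ and \emph{quadratic}, through $q(x_i)^2$, in the three coefficients $H_3,H_4,H_5$ of $q$; moreover it is even in $\vec H:=(H_3,H_4,H_5)$, reflecting the involution $q\to-q$ which preserves the spectral curve.

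Next I would eliminate the linear variables. The coefficient of $(H_0,H_1,H_2)$ in the $i$-th relation is $(\l_i^2,\l_i^2x_i,\l_i^2x_i^2)$, so the image of the map $(H_0,H_1,H_2)\mapsto(\l_i^2 p(x_i))_i$ is a three-dimensional subspace of $\C^6$. Choosing three linear functionals $\mu^{(1)},\mu^{(2)},\mu^{(3)}$ annihilating it --- i.e. $\sum_i\mu_i^{(k)}\l_i^2x_i^j=0$ for $j=0,1,2$ --- and applying them to the relations removes $H_0,H_1,H_2$ and leaves three equations
\begin{equation}
\sum_{i}\mu_i^{(k)}\,q(x_i)^2=-\sum_i\mu_i^{(k)}\l_i^4,\qquad k=1,2,3,
\end{equation}
each a quadratic form in $\vec H$ set equal to a constant, say $\vec H^{\,T}M_k\vec H=b_k$ with $M_k=\sum_i\mu_i^{(k)}v_iv_i^{T}$ and $v_i=(1,x_i,x_i^2)$. (Equivalently, one solves any three of the original relations linearly for $H_0,H_1,H_2$ and substitutes into the remaining three.)

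Finally I would reduce these three quadrics to a single univariate quartic. The two constant-free combinations $b_2(\vec H^{\,T}M_1\vec H)-b_1(\vec H^{\,T}M_2\vec H)=0$ and $b_3(\vec H^{\,T}M_1\vec H)-b_1(\vec H^{\,T}M_3\vec H)=0$ are homogeneous, hence define two conics in the projective plane $\C P^2$ with coordinates $[H_3:H_4:H_5]$. By B\'ezout's theorem these meet in four points; concretely, the resultant in $H_5$ of the two homogeneous quadratics is a binary quartic in $(H_3,H_4)$, i.e. after setting $u=H_3/H_4$ a single algebraic equation of degree $4$ in one variable --- the promised reduction. Each of the four projective solutions determines $\vec H$ up to the scaling fixed (modulo sign) by $\vec H^{\,T}M_1\vec H=b_1$; the residual sign is exactly $q\to-q$ and does not change $q^2$, so the four points correspond to four admissible spectral curves. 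Since a quartic is solvable by Ferrari's formulae, the Hamiltonians are expressible in radicals, and back-substitution recovers $H_3,H_4,H_5$ and then, linearly, $H_0,H_1,H_2$.

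The main obstacle is the elimination bookkeeping that pins the degree at exactly $4$: one must check that, for a generic leaf of the Hitchin foliation, the functionals $\mu^{(k)}$ exist and are independent, that the two derived conics are in general position (so their intersection is $0$-dimensional and the resultant neither drops degree nor acquires a spurious factor from the line $H_4=0$), and that the scaling equation is consistent. Verifying these genericity statements --- rather than any single hard estimate --- is where the real work lies; the quartic itself, and hence its radical solvability, then follow formally.
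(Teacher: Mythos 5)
The paper itself does not prove this proposition --- it is quoted from \cite{PBor} and used as a black box --- so there is no in-paper argument to compare yours against; I can only assess your reconstruction on its own terms, and it holds up. Your setup is right: for $g=2$ the $y$-terms in \refE{eq1} are indeed absent, the six separation relations are $\l_i^4+\l_i^2p(x_i)+q(x_i)^2=0$, linear in $(H_0,H_1,H_2)$ and an even quadratic in $(H_3,H_4,H_5)$. Eliminating the linear block by the three functionals annihilating the column span of the $6\times 3$ matrix with rows $\l_i^2(1,x_i,x_i^2)$ is legitimate (and equivalent, as you say, to solving three relations for $H_0,H_1,H_2$ and substituting), and it correctly leaves three inhomogeneous quadrics $\vec H^{\,T}M_k\vec H=b_k$. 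The passage to two homogeneous conics and their degree-$4$ resultant is sound, and the solution count is consistent: eight affine solutions in four $\pm$-pairs, matching the four roots of the quartic, with $H_0,H_1,H_2$ recovered linearly afterwards. Two small points worth making explicit rather than leaving under the blanket genericity caveat: (i) recovering the affine solution from a projective one requires $b_1\neq 0$ and $Q_1\neq 0$ at the intersection points (otherwise one must permute the roles of the three quadrics), and (ii) one should note that a projective common zero of the two pencil members with $Q_1=0$ would have to be a common zero of all three conics, which generically does not exist --- this is what guarantees that all four B\'ezout points survive the rescaling step. With those checks recorded, the reduction to a single quartic, and hence solvability in radicals by Ferrari, is complete; whether \cite{PBor} arrives at its quartic by the same elimination or by exploiting the biquadratic branch-point system \refE{so4rp} I cannot verify from this paper, but your route is self-contained and correct.
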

Thus, \emph{for $\g=\so(4)$, $g=2$ the problem of finding out the algebraic-geometric action-angle coordinates has been solved as explicitly as possible}, in the sense that finding the action coordinates has been reduced to the solution of a degree 4 equation, and the angle coordinates are expressed in terms of Abelian integrals in the known integration limits.

%%%%%%%%%%%%%%%%%%%%%%%%%%%%%%%%%
\subsection{Hitchin system for the Lie algebra $\spn(4)$ on a hyperelliptic curve of genus~2}
The number of degrees of freedom of this system is $10$. Lie algebra $\spn(4)$ has two basis invariants of degrees $2$ and $4$. Thus the equation of the spectral curve is
\begin{equation}\label{E:sp4}
R(\l, x, y, H) = \l^4+ \l^2r_2(x)+r_4(x) = 0,
\end{equation}
where, due to \refE{eq1},
\begin{align*}
       &r_2(x) = H_0 + xH_1 + x^2H_2,\\
       &r_4(x) = H_3 + \ldots + x^4H_7 + yH_8 + yxH_9.
\end{align*}
Similar to the $\sln(2)$ case, it is a smooth curve, because in a general position the system of equations for singular points has no solutions. Indeed, the system is as follows:
\begin{equation}\label{E:origsp4}
\left \{
\begin{array}{rclcc}
    R(\l, x, y, H) &=& \l^4 + \l^2r_2(x) + r_4(x) &=& 0, \\
    R'_{\l}(\l, x, y, H) &=& \l(4\l^2 + 2r_2(x))&=&0,\\
    R'_{x}(\l, x, y, H) &=& \l^2\cdot(r_2)'_x + (r_4)'_x &=& 0.
     \end{array}
\right.
\end{equation}
From the second equation either $\l = 0$ or $\l^2 = -\frac{r_2}{2}$. If $\l = 0$ the system takes the following form:
\[
\left \{
\begin{array}{rcl}
     \l &=& 0,\\
     r_4(x) &=& 0, \\
     (r_4)'_x &=& 0.
     \end{array}
\right.
\]
It can have solutions only in the case the $r_4(x)$ has multiple roots but this is not true in a generic position. In the case $\l^2 = -\frac{r_2}{2}$, the system \refE{origsp4} descends to the equation $r_2(x) = const$. This is also not the case in a generic position.

Branch points of the spectral curve are given by the first two equations of the system~\refE{origsp4}. For $\l = 0$ one has $r_4(x) = 0$ that is equivalent to the  equation of degree 8, hence gives 8 branch points. For $\l^2 = -\frac{r_2}{2}$ we obtain $\frac{r_2(x)^2}{2} = r_4(x)$ that also leads to an 8 degree equation, which would also result in 8 branch points, but by the symmetry in $\l$ the number of those branch points redoubles, i.e. it is equal to $16$. The total number of branch points is equal to $24$.

By Riemann-Hurwitz formula we obtain the following for the genus $\hat{g}$ of the spectral curve:
\[
2\hat{g} - 2 = 4(2g - 2) + 24,
\]
that gives $\hat{g} = 17$ for $g = 2$.
%%%%%%%%%%%%%%%%%%%%%%%%%%%%%%%%
%\input{SP4}
\begin{figure}[h]
\begin{picture}(200,130)
\unitlength=0.6pt
\thicklines
\thinlines
% нижняя половина
% наклонные: юго-запад
\put(55,20){\circle{20}}
\put(20,55){\circle{20}}
\put(47,28){\line(-1,1){20}}
\put(55,30){\line(-1,1){25}}
\put(45,20){\line(-1,1){25}}
% наклонные: юго-восток
\put(155,20){\circle{20}}
\put(190,55){\circle{20}}
\put(163,28){\line(1,1){20}}
\put(155,30){\line(1,1){25}}
\put(165,20){\line(1,1){25}}
%%% нижние горизонтальные
\put(64,22){\line(1,0){82}}
\put(65,15){\line(1,0){80}}
%\put(64,12){\line(1,0){82}}
\put(105,15){\circle*{5}}
\put(105,22){\circle*{5}}
%
% верхняя половина
% наклонные: северо-запад
\put(55,190){\circle{20}}
\put(20,155){\circle{20}}
\put(47,182){\line(-1,-1){20}}
\put(55,180){\line(-1,-1){25}}
\put(45,190){\line(-1,-1){25}}
% наклонные: северо-восток
\put(155,190){\circle{20}}
\put(190,155){\circle{20}}
\put(162,183){\line(1,-1){20}}
\put(155,180){\line(1,-1){25}}
\put(165,190){\line(1,-1){25}}
%%% верхние горизонтальные
\put(64,187){\line(1,0){82}}
\put(65,194){\line(1,0){80}}
%\put(64,197){\line(1,0){82}}
\put(105,194){\circle*{5}}
\put(105,186){\circle*{5}}
% левые вертикальные
\put(11,60){\line(0,1){89}}
\put(17,64){\line(0,1){80}}
\put(23,65){\line(0,1){80}}
\put(29,60){\line(0,1){89}}
\put(29,105){\circle*{5}}
\put(23,105){\circle*{5}}
% правые вертикальные
\put(181,60){\line(0,1){89}}
\put(187,64){\line(0,1){80}}
\put(193,65){\line(0,1){80}}
\put(199,60){\line(0,1){89}}
%\put(183,64){\line(0,1){82}}
%\put(190,65){\line(0,1){80}}
%\put(197,64){\line(0,1){82}}
%

\put(43,168){\circle*{5}}
\put(38,173){\circle*{5}}
\put(43,42){\circle*{5}}
\put(38,37){\circle*{5}}

\end{picture}
\caption{Spectral curve for $\spn(4)$, $g=2$}\label{Sp4_pic}
\end{figure}
%%%%%%%%%%%%%%%%%%%%%%%%%%%%%%%%%%

\noindent
The basis of holomorphic Prym differentials is given by
\[
\w^{(0)}_i = \frac{\l^2x^{i}dx}{y\l(4\l^2+2r_2)},~ i = 0, 1, 2,
\]
\[
\w^{(1)}_i = \frac{\l x^{i-3}dx}{y\l(4\l^2+2r_2)},~ i = 3, \ldots, 7
\]
\[
\w^{(2)}_i = \frac{\l x^{i-8}dx}{\l(4\l^2+2r_2)},~ i = 8, 9.
\]
The involution $\s$ operates as a reflection in the vertical axis at Figure~\ref{Sp4_pic}. To normalize the basis of differentials, one should choose $N = 10$ cuts which are not permuted by the involution, for instance, the cuts corresponding to the lines marked by means the points at the picture (Figure~\ref{Sp4_pic}).

The involution has $8$ fixed points located on the horizontal lines at the picture.

To construct the Prym map one should choose an arbitrary fixed point of the involution. Denote it by $Q$. Then the Prym map is as follows:
\[
\eta(P) = \int_{Q}^{P}\w
\]
where $\w$ is a set of normalized differentials, as above.
%%%%%%%%%%%%%%%%%%%%%%%%%%%%%%%%%%%%%%%%%%%%%%%%%%%%%%%%%%%%%%
\subsection{Hitchin system for the Lie algebra $\so(5)$ on a hyperelliptic curve of genus~2}
This system is locally isomorphic to the previous one \cite{Hitchin}. It also has $10$ degrees of freedom and two basis invariants of degrees $2$ and $4$. The spectral curve has the form
\[
R(\l, x, y, H) = \l^5+ \l^3r_2(x)+\l r_4(x) = 0.
\]
It splits to two irreducible components: $\l = 0$ and $\l^4+ \l^2r_2(x)+r_4(x) = 0$. The first component is trivial, the second one has the same form as the spectral curve for $\spn(4)$. Thus a local description of this system completely descends to the $\spn(4)$ case.
%%%%%%%%%%%%%%%%%%%%%%%%%%%%%%%%%%%%%%%%%%%%%%%
\section{Discussion}\label{S:concl}
In this section, we would like to compare our results with the results of  \cite{Gaw} pursuing similar goals and, in some sense, holding a record since 1998 till 2018. In \cite{Gaw}, a specific Lax representation giving the Hitchin system in the simplest case of the gauge group $SL(2)$ and a base curve of genus $2$ is used. The spectral curve in \cite{Gaw} is different from the standard spectral curve of the system, in particular, from our spectral curve. The standard curve has genus $5$ and is a covering of the spectral curve in \cite{Gaw} which is hyperelliptic of genus $3$. It has been proved in \cite{Gaw} that in the case of the structure group $SL(2)$ and a base curve of genus $2$
\begin{enumerate}
\item the equation of the spectral curve in \cite{Gaw} is a term-by-term product of the equation of the standard  spectral curve by the equation of the base curve;
\item Prym differentials on the standard curve are liftings of all holomorphic differentials on the spectral curve in \cite{Gaw}.
\end{enumerate}
Further on we will formulate the generalization of this result for an arbitrary genus (a gauge group is still $SL(2)$).
\begin{proposition}
\begin{enumerate}
\item Multiply the equations of the spectral curve and of the base curve term by term. Then the normalization of the obtained curve has genus $3(g - 1)$. It is equal to the dimension of the corresponding Prym variety.
\item Prym differentials on the spectral curve can be obtained as a lifting of all holomorphic differentials on this curve.
\end{enumerate}
\end{proposition}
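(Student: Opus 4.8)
The plan is to produce the product curve $C$ explicitly, compute the genus of its normalization by Riemann--Hurwitz, and then match the liftings of its holomorphic differentials against the list from \refP{prop1}. Write $r_2(x,y)=A(x)+B(x)y$ with $\deg A=2(g-1)$ and $\deg B=g-3$, so the spectral curve $\l^2+r_2=0$ is a double cover of $\Sigma\colon y^2=P_{2g+1}(x)$ with involution $\s\colon\l\mapsto-\l$; by the dimension formula quoted from \cite{Hitchin} in \refP{prop1} its Prym variety has dimension $\dim\g\cdot(g-1)=3(g-1)$, which is already the ``$=\dim$ Prym'' half of part (1). Introducing $\mu=\l y$, the term-by-term product of $\l^2+r_2=0$ with $y^2-P_{2g+1}=0$ yields the relation $\mu^2=-r_2\,P_{2g+1}$; together with the projection to the $x$-line this is what singles out $C$. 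I would then compute $g(C)$ by Riemann--Hurwitz: locate the branch points over $\C P^1$ (zeros of $r_2$ and the ramification of the base), resolve the singularities forced by the product structure at the points $y=0$, and confirm the count gives $g(C)=3(g-1)$.

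For part (2) I would rewrite the Prym differentials of \refP{prop1}. In the coordinate $\mu=\l y$ the families $\frac{x^k\,dx}{2y\l}$ $(0\le k\le 2(g-1))$ and $\frac{x^s\,dx}{2\l}$ $(0\le s\le g-3)$ become $\frac{x^k\,dx}{2\mu}$ and $\frac{x^s y\,dx}{2\mu}$, i.e.\ $\frac{c\,dx}{2\mu}$ with $c$ ranging over the basis $\{1,\dots,x^{2(g-1)},y,\dots,x^{g-3}y\}$ of $\O(-2D)$ from \refL{lemma1}. I would check that these are exactly the pull-backs to the spectral curve of the holomorphic differentials of $C$: there are $3(g-1)$ of them, equal to $g(C)$, and each is holomorphic by \refP{h_diff}. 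Hence they form a basis of the holomorphic differentials of $C$, so every holomorphic differential on $C$ lifts to a Prym differential and conversely, which is (2); comparing period lattices then identifies the Jacobian of $C$ with the Prym variety, giving the geometric content behind the dimension equality in (1).

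The main obstacle is the passage from $g=2$ to $g\ge 3$. For $g=2$ the second family is empty, $r_2=r_2(x)$ is independent of $y$, and $\s$ composed with the hyperelliptic involution $y\mapsto-y$ is a fixed-point-free automorphism of the spectral curve; then $C$ is literally the quotient by it, the covering onto $C$ has degree $2$, and $C\colon \mu^2=-r_2(x)P_5(x)$ is hyperelliptic of genus $3$ with holomorphic differentials $x^k\,dx/\mu$ $(k=0,1,2)$ --- precisely the rewritten Prym differentials. For $g\ge 3$, however, $B\neq0$, the involution $y\mapsto-y$ no longer lifts (it fails to preserve $\l^2=-A-By$), and the naive locus $\mu^2=-r_2P_{2g+1}$ together with $y^2=P_{2g+1}$ is birational to the spectral curve itself, of genus $4g-3$ rather than $3(g-1)$; moreover the differentials $x^s y\,dx/(2\mu)$ genuinely involve $y$. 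Thus for $g\ge 3$ the product curve must be taken as the correct lower-genus curve underlying the coordinate $\mu$ --- the one realizing the Prym as a Jacobian --- and the heart of the argument is the branch-point and singularity bookkeeping showing its normalization has genus exactly $3(g-1)$, together with the well-definedness of the lift $\w_j\mapsto$ (differential on $C$) and its surjectivity onto all holomorphic differentials of $C$. I expect this genus computation, and the attendant Jacobian--Prym identification, to be the crux.
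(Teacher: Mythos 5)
Your strategy coincides with the paper's: form the term\mbox{-}by\mbox{-}term product, substitute $\s=\l y$, compute the genus of the resulting curve by Riemann--Hurwitz over $\C P^1$, and recognize the rewritten differentials $\frac{(\partial r/\partial H_j)\,dx}{2\s}$ as the Prym differentials of \refP{prop1}. For $g=2$ your account is complete and matches the paper (and Gawedzki's original computation): $b=0$, the fixed-point-free involution $(\l,y)\mapsto(-\l,-y)$ exists, the quotient is $\s^2=-r(x)P_5(x)$ of genus $3$, and $x^k\,dx/\s$ pull back to the Prym basis. But for $g\ge 3$ you explicitly defer the crux --- the genus count for the product curve and the well-definedness of the lift --- so as it stands the proposal does not prove the statement for arbitrary genus, which is precisely what the proposition claims beyond \cite{Gaw}.

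That said, the gap you decline to fill is exactly the weak point of the paper's own argument, and your diagnosis of it is correct. The paper computes the genus of $\s^2+r(x,y)P_{2g+1}(x)=0$ by treating it as a degree-two cover of the $x$-sphere with $6g-4$ branch points; this tacitly treats $-r(x,y)P_{2g+1}(x)$ as a polynomial in $x$ alone, i.e.\ assumes $b=0$ in $r=a(x)+yb(x)$, which fails generically once $g\ge 3$. Your objection can even be sharpened: on the locus $\s^2=-rP_{2g+1}$, $y^2=P_{2g+1}$ with $b\ne 0$ one recovers $y=-(\s^2+aP_{2g+1})/(bP_{2g+1})$, so $(\s,x)$ already determines $(\l,x,y)$ and the product curve is birational to the spectral curve of genus $4g-3$, not $3(g-1)$; equivalently, if both $dx/(\l y)$ and $dx/\l$ were pulled back from a curve $C$ along a morphism from the spectral curve, their ratio $y$ (hence also $\l$) would descend to $C$, forcing that morphism to be birational and $C$ to have too many holomorphic differentials. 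So the missing step in your proposal is a genuine gap, but it is one the paper's proof does not close either; repairing it would require either restricting to $g=2$ or replacing the naive product curve by a different construction, and you are right to flag this as the crux rather than a routine bookkeeping exercise.
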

It actually means that the Prym varieties of spectral curves for $SL(2)$-type Hitchin systems are Jacobians. Classification of the curves possessing this property is given in \cite{Shokur}. The above considered example does not fall under the terms of this classification, at least because the set of singular points and the set of the fixed points of the involution are not the same. The singular point is unique while the branch points coincide with zeros of the polynomial $P_{2g + 1}$, as it follows from the equation \refE{eq3}.
\begin{proof}
Take the equation of the spectral curve in the form $\l^2 - r(x, y) = 0$. The base curve is given by $y^2 = P_{2g + 1}(x)$. Multiplying these equations, we obtain
\begin{equation}\label{E:eq2}
\l^2y^2 + r(x, y)P_{2g + 1}(x) = 0.
\end{equation}
Following the lines of \cite{Gaw}, make a substitution $\s = \l y$:
\begin{equation}\label{E:eq3}
\s^2 + r(x, y)P_{2g + 1}(x) = 0.
\end{equation}
We calculate a genus of the normalization of the obtained curve as of a branch covering of Riemann sphere. The set of its branch points is a union of branch points of the spectral curve and branch points of the base curve, but there is a subtlety: the infinity is a branch point for both curves, so it becomes a simple (in a generic position) singular point. The number of branch points of the spectral curve is equal to $4g - 4$. The base curve has $2g + 2$ branch points (including infinity). Thus, the total number of branch points of the curve \refE{eq3} is equal to $(4g - 4) + (2g + 2) - 2 = 6g -4$ (see the remark below). A genus $\hat{g}$ of the normalization is given by Riemann-Hurwitz formula:
$
2\hat{g} - 2 = 2(h-2) + 6g - 4
$
where $h$ is a genus of the base. In the case in question, $h=0$ which implies $\hat{g} = 3g - 3$.

The holomorphic differentials on \refE{eq3} are given by the relation
\[
\w_j = \frac{(\partial r/\partial H_j)dx}{2\s}, ~j = 1, \ldots, 3g - 3.
\]
Substituting $\s=\l y$ one  obtains
\[
\w_j = \frac{(\partial r/\partial H_j)}{2\l}\frac{dx}{y}, ~j = 1, \ldots, 3g - 3,
\]
that is, the basic Prym differentials on the (standard )spectral curve (according to \refP{prop1}).
\end{proof}
\begin{remark}
We should explain, why the number of branch points of the spectral curve \refE{eq3} is equal to $4g - 4$, including the infinity point. Write $r$ in the form
\[
r(x, y) = a(x) + yb(x),
\]
where $a(x)$ and $b(x)$ are polynomials of the degrees $4g - 4$ and $4g - 5$. Then we represent the equation of the spectral curve in the form
\[
\l^2 + a = -yb,
\]
and set equal the squares of the both parts of it. By that, we get rid of irrationalities but the equation becomes reducible: $(\l^2+a)^2=y^2b^2$, or $(\l^2+a+yb)(\l^2+a-yb)=0$, and one of the irreducible components is the same as the spectral curve itself. At the finite branch points either $\l = 0$ or $\l^2 = -a$. It is easy to realize that the number of the branch points of the first kind is equal to $4g - 4$, and of the second kind to $4g - 5$. As the number of all branch points must be even, we should add infinity as a branch point to one of the irreducible components. On the other hand, irreducible components are symmetric, hence infinity is a branch point for both components.
\end{remark}

%}  % Large
%%%%%%%%%%%%%%%%%%%%%%%%%%%%%%%%%%%%%%%%%%%%%%%%%%%%
\end{document}